\title{\LARGE \bf
More Information is Not Always Better:\\
Connections between Zero-Sum Local Nash Equilibria in\\ Feedback and Open-Loop Information Patterns
}
\author{Kushagra Gupta$^{1}$, Ross E. Allen$^{2}$, David Fridovich-Keil$^{3}$ and Ufuk Topcu$^{3}$
\thanks{DISTRIBUTION STATEMENT A. Approved for public release. Distribution is unlimited.
This material is based upon work supported by the Under Secretary of Defense for Research and Engineering under Air Force Contract No. FA8702-15-D-0001. Any opinions, findings, conclusions or recommendations expressed in this material are those of the author(s) and do not necessarily reflect the views of the Under Secretary of Defense for Research and Engineering.
© 2025 Massachusetts Institute of Technology.
Delivered to the U.S. Government with Unlimited Rights, as defined in DFARS Part 252.227-7013 or 7014 (Feb 2014). Notwithstanding any copyright notice, U.S. Government rights in this work are defined by DFARS 252.227-7013 or DFARS 252.227-7014 as detailed above. Use of this work other than as specifically authorized by the U.S. Government may violate any copyrights that exist in this work. This work was also supported by the following: Office of Naval Research (ONR) grants ONR N00014-22-1-2703 and ONR N00014-24-1-2797 as well as the National Science Foundation CAREER award under Grant No. 2336840.}
\thanks{$^{1}$Kushagra Gupta is with the Department of Electrical and Computer Engineering, The University of Austin at Texas, TX, 78712, USA
        {\tt\small kushagrag@utexas.edu}}%
\thanks{$^{2}$Ross E. Allen with Massachusetts Institute of Technology Lincoln Laboratory, Lexington, MA 02420, USA
        {\tt\small ross.allen@ll.mit.edu}}%
\thanks{$^{2}$David Fridovich-Keil and Ufuk Topcu are with the Department of Aerospace Engineering and the Oden Institute for Computational
Engineering and Sciences, The University of Austin at Texas, TX, 78712, USA
        {\tt\small dfk@utexas.edu, utopcu@utexas.edu}}%
}
\newtheorem{theorem}{Theorem}
\newtheorem{remark}{Remark}
\begin{document}

\maketitle
\thispagestyle{empty}
\pagestyle{empty}
\newcommand{\kushy}[1]{\textcolor{red}{kushy: #1}}
\newcommand{\david}[1]{\textcolor{blue}{david: #1}}
\newcommand{\ufuk}[1]{\textcolor{blue}{ufuk: #1}}
\newcommand{\doubt}[1]{\textcolor{red}{#1}}

\newcommand{\bb}[1]{\mathbb{#1}}

\newcommand{\x}{x}
\newcommand{\xt}{\x_t}
\newcommand{\xs}{\x_s}
\newcommand{\xtone}{\x_{t+1}}
\newcommand{\xsone}{\x_{s+1}}
\newcommand{\xttoend}{\x_{t:K+1}}
\newcommand{\xol}{\x_{OL}}
\newcommand{\xolt}{\x_{OL_{t}}}
\newcommand{\xoltime}[1]{\x_{OL_{#1}}}
\newcommand{\xkone}{\x_{K+1}}
\newcommand{\xstar}{\x^*}
\newcommand{\xstartime}[1]{\x^*_{#1}}

\newcommand{\control}{u}
\newcommand{\ui}{\control^i}
\newcommand{\umi}{\control^{-i}}
\newcommand{\uone}{\control^1}
\newcommand{\utwo}{\control^2}
\newcommand{\uit}{\ui_t}
\newcommand{\uis}{\ui_s}
\newcommand{\umis}{\umi_s}
\newcommand{\umit}{\umi_t}
\newcommand{\uonet}{\uone_t}
\newcommand{\utwot}{\utwo_t}
\newcommand{\uones}{\uone_s}
\newcommand{\utwos}{\utwo_s}
\newcommand{\uttoend}{\control_{t:K}}
\newcommand{\uiol}{\control^{i,\text{OL}}}
\newcommand{\uoneol}{\control^{1,\text{OL}}}
\newcommand{\utwool}{\control^{2,\text{OL}}}
\newcommand{\uiolt}{\uiol_t}
\newcommand{\uioltime}[1]{\uiol_{#1}}
\newcommand{\uifb}{\control^{i,\text{FB}}}
\newcommand{\uifbtstar}{\control^{i*,\text{FB}}}
\newcommand{\umifb}{\control^{-i,\text{FB}}}
\newcommand{\uonefb}{\control^{1,\text{FB}}}
\newcommand{\utwofb}{\control^{2,\text{FB}}}
\newcommand{\uifbt}{\uifb_t}
\newcommand{\umifbt}{\umifb_t}
\newcommand{\uifbtime}[1]{\uifb_{#1}}
\newcommand{\umifbtime}[1]{\umifb_{#1}}
\newcommand{\uonefbtime}[1]{\uonefb_{#1}}
\newcommand{\utwofbtime}[1]{\utwofb_{#1}}
\newcommand{\uonefbstar}{\control^{1*,\text{FB}}}
\newcommand{\utwofbstar}{\control^{2*,\text{FB}}}
\newcommand{\umioltime}[1]{\umiol_{#1}}
\newcommand{\uoneoltime}[1]{\uoneol_{#1}}
\newcommand{\utwooltime}[1]{\utwool_{#1}}
\newcommand{\uoneolstar}{\control^{1*,\text{OL}}}
\newcommand{\utwoolstar}{\control^{2*,\text{OL}}}

\newcommand{\cost}{\ell}
\newcommand{\lone}{\cost}
\newcommand{\ltwo}{-\cost}
\newcommand{\lonet}{\lone_t}
\newcommand{\ltwot}{\ltwo_t}
\newcommand{\lit}{\cost^i_t}
\newcommand{\lis}{\cost^i_s}
\newcommand{\li}{\cost^i}
\newcommand{\lt}{\cost_t}
\newcommand{\lmit}{\cost^{-i}_t}
\newcommand{\lmis}{\cost^{-i}_s}
\newcommand{\lmi}{\cost^{-i}}

\newcommand{\lamb}{\lambda}
\newcommand{\lambi}{\lambda^i}
\newcommand{\lambmi}{\lambda^{-i}}
\newcommand{\lambit}{\lambi_t}
\newcommand{\lambitop}{\lamb^{i\top}_t}
\newcommand{\lambis}{\lambi_s}
\newcommand{\lambistop}{\lamb^{i\top}_s}

\newcommand{\psimap}{\psi}
\newcommand{\psimapit}{\psimap^i_t}
\newcommand{\psimapitop}{\psimap^{i\top}_t}
\newcommand{\psimapis}{\psimap^i_s}
\newcommand{\psimapistop}{\psimap^{i\top}_s}
\newcommand{\psimapmis}{\psimap^{-i}_s}
\newcommand{\psimapmik}{\psimap^{-i}_K}
\newcommand{\psimapik}{\psimap^{i}_K}

\newcommand{\lagrangianOL}{\mathcal{L}^{OL}}
\newcommand{\lagrangianFB}{\mathcal{L}^{FB}}
\newcommand{\lagrangianiOL}{\mathcal{L}^{i,OL}}
\newcommand{\lagrangianiFB}{\mathcal{L}^{i,FB}}

\newcommand{\map}{\pi}
\newcommand{\mapi}{\map^i}
\newcommand{\mapone}{\map^1}
\newcommand{\maptwo}{\map^2}
\newcommand{\mapmi}{\map^{-i}}
\newcommand{\mapit}{\mapi_t}
\newcommand{\maponet}{\mapone_t}
\newcommand{\maptwot}{\maptwo_t}
\newcommand{\mapmit}{\mapmi_t}
\newcommand{\mapmis}{\mapmi_s}
\newcommand{\mapieq}{\map^{i*}}
\newcommand{\maponeeq}{\map^{1*}}
\newcommand{\maptwoeq}{\map^{2*}}

\newcommand{\val}{V}
\newcommand{\valt}{\val_t}
\newcommand{\valit}{\val^i_t}

\newcommand{\conval}{Z}
\newcommand{\convalt}{\conval_t}
\newcommand{\convalit}{\conval^i_t}

\newcommand{\lb}{\left(}
\newcommand{\rb}{\right)}

\newcommand{\ai}{a^i}
\newcommand{\ami}{a^{-i}}
\newcommand{\ait}{\ai_t}
\newcommand{\amit}{\ami_t}
\newcommand{\ais}{\ai_s}
\newcommand{\amis}{\ami_s}
\newcommand{\bi}{b^i}
\newcommand{\bit}{\bi_t}
\newcommand{\bmi}{b^{-i}}
\newcommand{\bmit}{\bmi_t}
\newcommand{\bis}{\bi_s}
\newcommand{\bmis}{\bmi_s}

\newcommand{\nulow}{\underaccent{\bar}{\nu}}
\newcommand{\nuhigh}{\bar{\nu}}
\newcommand{\lagrangianOLnew}{\tilde{\mathcal{L}}^{OL}}
\newcommand{\lagrangianFBnew}{\tilde{\mathcal{L}}^{FB}}
\newcommand{\lagrangianiOLnew}{\tilde{\mathcal{L}}^{i,OL}}
\newcommand{\lagrangianiFBnew}{\tilde{\mathcal{L}}^{i,FB}}

\newcommand{\lambnew}{\tilde{\lambda}}
\newcommand{\lambinew}{\tilde{\lambda}^i}
\newcommand{\lambminew}{\tilde{\lambda}^{-i}}
\newcommand{\lambitnew}{\lambinew_t}
\newcommand{\lambitopnew}{\lambnew^{i\top}_t}
\newcommand{\lambisnew}{\lambinew_s}
\newcommand{\lambistopnew}{\lambnew^{i\top}_s}

\newcommand{\psimapnew}{\tilde{\psi}}
\newcommand{\psimapitnew}{\psimapnew^i_t}
\newcommand{\psimapitopnew}{\psimapnew^{i\top}_t}
\newcommand{\psimapisnew}{\psimapnew^i_s}
\newcommand{\psimapistopnew}{\psimapnew^{i\top}_s}
\newcommand{\psimapmisnew}{\psimapnew^{-i}_s}
\newcommand{\psimapmiknew}{\psimapnew^{-i}_K}
\newcommand{\psimapiknew}{\psimapnew^{i}_K}

\begin{abstract}
Non-cooperative dynamic game theory provides a principled approach to modeling sequential decision-making among multiple noncommunicative agents. A key focus has been on finding Nash equilibria in two-agent zero-sum dynamic games under various information structures. A well-known result states that in linear-quadratic games, unique Nash equilibria under feedback and open-loop information structures yield identical trajectories. Motivated by two key perspectives---(i) many real-world problems extend beyond linear-quadratic settings and lack unique equilibria, making only \emph{local} Nash equilibria computable, and (ii) local open-loop Nash equilibria (OLNE) are easier to compute than local feedback Nash equilibria (FBNE)---it is natural to ask whether a similar result holds for local equilibria in zero-sum games. To this end, we establish that for a broad class of zero-sum games with potentially nonconvex-nonconcave objectives and nonlinear dynamics: (i) the state/control trajectory of a local FBNE satisfies local OLNE first-order optimality conditions, and vice versa, (ii) a local FBNE trajectory satisfies local OLNE second-order necessary conditions, (iii) a local FBNE trajectory satisfying feedback sufficiency conditions also constitutes a local OLNE, and  (iv) with additional hard constraints on agents' actuations, a local FBNE where strict complementarity holds also satisfies local OLNE first-order optimality conditions, and vice versa.
\end{abstract}

\section{Introduction}\label{sec: intro}
In contrast to single-agent optimization problems, dynamic games require the additional specification of an \emph{information structure}, which is the information available to each agent at every time step of the decision-making process. Information structures have two extremes: open-loop and feedback.
The open-loop information structure assumes that, at each time $t$, every agent knows only the initial state of the game and nothing else about the state. 
On the other hand, the feedback information structure assumes that, at each time $t$, every agent knows the full state of all agents. 
The underlying information structure of a game can greatly affect the existence and expressivity of Nash equilibrium solutions. 
In particular, feedback strategies can encode complex behaviors such as delayed commitment which are not expressible in open-loop strategies \cite{SmoothGameTheory,gupta2023game,Zhan-RSS-21}. 
However, outside of linear-quadratic settings, feedback Nash equilibria are generally far more complicated to compute than open-loop Nash solutions, cf. \cite{laine2023computation}.
\emph{Therefore, it is always valuable to know if for a game, the more computationally-intensive feedback equilibrium differs from an open-loop equilibrium.} 
\par
In the general-sum game setting, it is known that feedback Nash equilibria (FBNE) and open-loop Nash equilibria (OLNE) often diverge greatly \cite{chiu2024extent}. In the two-agent zero-sum setting, while some results comparing FBNE and OLNE do exist, these results either (i) are restricted to linear-quadratic (LQ) games or (ii) assume the existence of a strongly unique Nash equilibrium. \par
However, we observe that these existing results cannot be applied to a large class of zero-sum games that have significant practical applications. In particular, many applications have nonlinear dynamics and more general nonquadratic costs. Such scenarios are often solved by employing iterative algorithms that solve an approximated LQ game at every iteration \cite{starr1969nonzero,fridovich2020efficient,le2022algames,laine2023computation}. In such non-LQ settings, these methods can only find an \emph{approximate local} FBNE/OLNE: it is generally intractable to find a global Nash equilibria, let alone one which is strongly unique. 
\par
Further, existing results that relax the LQ assumption are difficult to verify in potentially nonconvex-nonconcave settings such as generative adversarial network training \cite{goodfellow2014generative}, robust optimization \cite{ben2009robust}, multi-agent reinforcement learning \cite{daskalakis2020independent}, etc. In such settings, a unique Nash equilibrium may not exist for any information structure, or may be intractable to find. In such zero-sum nonconvex-nonconcave settings, and without additional structural assumptions, all existing game-theoretic solvers can only find local Nash equilibria, if they exist \cite{adolphs2019local,mazumdar2019finding,gupta2024second}.

\textbf{Contributions.} With this discussion in mind, it is pertinent to ask: \emph{How are local FBNE and local OLNE related in zero-sum games that extend beyond linear-quadratic settings, and/or do not have a unique equilibrium?} To this end, we present the following contributions, which apply in a large class of zero-sum games with potentially nonconvex-nonconcave costs and non-linear dynamics:
\begin{enumerate}
    \item We show that any local FBNE also satisfies the first-order necessary conditions for a local OLNE of the game, and vice versa.
    \item We show that any local FBNE also satisfy the second-order necessary conditions for a local OLNE of the game. Further, a local FBNE satisfying feedback second-order sufficiency conditions also constitutes a local OLNE of the game.
    \item We show that, in the presence of additional constraints on agents’ control variables, any local FBNE satisfying strict complementarity still satisfies the first-order necessary conditions for a local OLNE of the game, and vice versa.
\end{enumerate}

\section{Related Existing Results}
In the setting of general-sum dynamic games, it is well known that the state trajectories formed by control strategies corresponding to FBNE and OLNE often diverge greatly, and the extent of their divergence has been studied for general-sum linear-quadratic (LQ) dynamic games \cite{chiu2024extent}.
\par
In zero-sum games however, the alignment of these equilibrium concepts becomes more nuanced.  In continuous-time zero-sum pursuit-evasion games \cite{isaacs1999differential}, for example, open-loop strategies are often used to synthesize feedback strategies \cite[Chapter 8]{bacsar1998dynamic}. Moreover, the existing results in the discrete-time zero-sum setting hold only for a limited class of games. For a certain subclass of two-agent zero-sum LQ games, including convex-concave zero-sum LQ games, it is known that a unique FBNE exists and its open-loop realization is also an (not necessarily unique) OLNE \cite[Theorem 6.7]{bacsar1998dynamic}. Further, while the existence of a unique OLNE implies the existence of a unique FBNE in a two-agent zero-sum LQ game, the converse is not true \cite[Proposition 6.2]{bacsar1998dynamic}. Thus, when they both exist, a unique OLNE and a unique FBNE generate the same state trajectory in a two-agent zero-sum LQ game \cite[Remark 6.7]{bacsar1998dynamic}. 
\par
Another existing result applies beyond the setting of LQ games, but in turn requires the game to have a \emph{strongly} unique FBNE. A strongly unique equilibrium is a unique equilibrium where each agent's equilibrium strategy is the unique best response to the other's.
For a zero-sum dynamic game, if a strongly unique FBNE and an OLNE exist, then the OLNE is unique and generates the same state trajectory as the FBNE. 
On the other hand, if a strongly unique OLNE and a FBNE exist for the game, then the two also give the same state trajectory \cite[Theorem 6.9]{bacsar1998dynamic}. 
\par
As mentioned in \Cref{sec: intro},  these existing results \emph{do not cover} practical settings of interest, in which 
(i) the games are not linear-quadratic, or (ii) it is not possible to reason about or compute (strongly) unique equilibria, and only the notions of \emph{local} FBNE and OLNE are computable. 
\section{Preliminaries}\label{sec: Preliminaries}
For $n\in \mathbb{N}$, let $[n]$ denote the set $\{1,\dots,n\}$. 
Consider a two-agent zero-sum dynamic game with a fixed decision-making time horizon of $K$ time steps, with dynamics given by $\xtone=f_t(\xt, \uonet, \utwot),~t\in[K]$, where $\xt \in\bb{R}^n$ denotes the concatenated states of both agents at time $t$, and $\uit\in\bb{R}^{m_i}$ denotes the control action of agent $i$ at time $t$. 
For agent $i$, we denote the other agent as $-i:=\{1,2\} \setminus \{i\}$. For $t\in[K]$, let $\xttoend:=\{\xt,\dots,\x_{K+1}\}$, and $\uttoend^i := \{\uit,\dots,\ui_K\}$. Let the set $\x\left(\uone_{1:K}, \utwo_{1:K}\right)$ denote the state trajectory $\x_{1:K+1}$ obtained by unrolling the set of controls $\uone_{1:K}$ and $\utwo_{1:K}$ according to the dynamics $f_t$, and an initial state $\x_1$. For $t\in[K-1]$, let $T_t = \{t,\dots,K\}$.
\par
Without loss of generality, we assume that at time $t$, agent $1$ minimizes a stage-wise cost $\lonet\left(\xt,\uonet,\utwot\right)\in\bb{R}$ (and thus, agent $2$ minimizes $-\lonet$). Further, let the terminal state cost for agent $1$ be represented by $\cost_{K+1}\left(\xkone\right)$. 
Notably, we assume $\cost_t$ can be nonconvex (and nonquadratic), and that $f_t$ can be nonlinear. The only assumption for $\lit$ is that it is $K$ times differentiable, which is an implicit requirement for finding FBNE \cite{laine2023computation, bacsar1998dynamic}.
We denote the cumulative cost for agent $1$ as $J\left(\uone_{1:K}, \utwo_{1:K}\right) =\sum_{t=1}^K \lonet\lb\xt,\uonet,\utwot\rb + \cost_{K+1}\lb\xkone\rb$
where $\xt\in\x\left(\uone_{1:K}, \utwo_{1:K}\right)$. 
For brevity, we denote the stagewise cost for agent $i$ as $\lit$ and the dynamics as $f_t$, and $\li(\xt, \uit, \umit), f_t(\xt, \uit, \umit)$ denote the cost/dynamics functions with the appropriate order of control arguments.
Let the initial state $\x_1$ be known to both agents.
\subsection{Open-Loop Zero-Sum Games}
A local OLNE of the two-agent zero-sum game is a set of controls $\uiol_{1:K} = \{\uioltime{1},\dots,\uioltime{K}\}, ~i=1,2$, and states $\x(\uoneol_{1:K}, \utwool_{1:K})$ such that 
\begin{align}
    \nonumber J(\uoneol_{1:K}, & \utwo_{1:K}) \leq J(\uoneol_{1:K}, \utwool_{1:K}) \leq J(\uone_{1:K}, \utwool_{1:K}) \\
    &\forall ~\uone_{1:K}\in\mathcal{N}(\uoneol_{1:K}),~ \utwo_{1:K}\in\mathcal{N}(\utwool_{1:K}), \label{eq: OLNE def}
\end{align}
where $\mathcal{N}(\cdot)$ denotes a feasible neighborhood around its argument. A local OLNE is said to be \emph{strict} if the inequalities in \eqref{eq: OLNE def} are strict. Note that strictness is different from and does not imply the uniqueness of a local Nash equilibrium. Finding a local OLNE of a two-agent zero-sum game with only dynamics constraints amounts to locally solving the following equilibrium problem:
\begin{align}
    \nonumber \underbrace{\min_{\x_{2:K+1},~\control_{1:K}^1} J\left(\uone_{1:K}, \utwo_{1:K}\right)}_{\text{Agent}~1}&\quad \underbrace{\min_{\x_{2:K+1},~\control_{1:K}^2}-J\left(\uone_{1:K}, \utwo_{1:K}\right)}_{\text{Agent}~2}, \\
    \text{s.t.}~\xtone=~&f_t(\xt,\uonet,\utwot), ~t\in[K]. \label{eq: ol game}
\end{align}
\subsubsection{Necessary conditions for local OLNE}
 The Karush-Kuhn-Tucker (KKT) conditions \cite{nocedal1999numerical} for \eqref{eq: ol game} give the first-order necessary conditions that the corresponding local OLNE must satisfy, under an appropriate constraint qualification. Introducing Lagrange multipliers $\lambit, t\in[K]$ for the constraints in \eqref{eq: ol game}, we define the Lagrangian for agent $i$ as
 \begin{align}
     \lagrangianiOL := \sum_{t=1}^{K+1} \lit - \sum_{t=1}^{K} \lambitop\left(\xtone - f_t\right),
 \end{align}
which in turn yields the following KKT conditions:
 \begin{align}
     \nabla_{\xt}\lit + \nabla_{\xt}f_t^\top \lambit -\lamb^i_{t-1} =& ~0, ~\forall ~t\in T_2,\label{eq: ol kkt nabla xt}\\
     \nabla_{\uit}\lit + \nabla_{\uit}f_t^\top \lambit =& ~0, ~\forall ~t\in[K], \label{eq: ol kkt nabla ut}\\
     \nabla_{\x_{K+1}}\li_{K+1} - \lamb^i_{K} =&~ 0,\label{eq: ol kkt nabla xfinal}\\
     \xtone - f_t(\xt,\uonet,\utwot) =&~ 0, ~\forall ~t\in[K]. \label{eq: ol kkt dynamics cons}
 \end{align}

\subsubsection{Sufficiency conditions for local OLNE} Nonlinear programming theory gives second-order sufficiency conditions for \eqref{eq: ol game}. If a set of states and controls satisfy these conditions in addition to the necessary KKT conditions \eqref{eq: ol kkt nabla xt}-\eqref{eq: ol kkt dynamics cons}, then they correspond to a local OLNE. Consider the stage-wise Lagrangian for agent $i$
\begin{align}
    \lagrangianiOL_t := \lit - \lambitop\lb\xtone-f_t\rb.
\end{align}
Then the second-order sufficiency conditions are \cite[Theorem 12.6]{nocedal1999numerical}
\begin{align}
   \nonumber &d_{\ui_1}^\top \nabla^2_{\ui_1} d_{\ui_1} +  \sum_{t=2}^K\begin{bmatrix}
        d_{\xt}\\d_{\uit}
    \end{bmatrix}^\top \nabla_{\xt, \uit}^2\lagrangianiOL_t\begin{bmatrix}
        d_{\xt}\\d_{\uit}
    \end{bmatrix}\quad\quad\quad\quad
     \\
     &\quad\quad\quad\quad+ ~d_{\x_{K+1}}^\top \nabla^2_{\xkone} \li_{K+1} d_{\x_{K+1}} > 0\label{eq: ol second order start}\\
    \nonumber&\forall~\{d_{\ui_1}, d_{\xt}, d_{\uit}, d_{\xkone} ~t\in T_2\}~\text{s.t.}\\
    &\quad\quad\quad\,~d_{\xtone} - \nabla_{\xt}f_t d_{\xt}- \nabla_{\uit} f_t d_{\uit} = 0.\label{eq: ol second order end}
\end{align}
Here, $\begin{bmatrix}
        d_{\ui_1}^\top & d_{\xt}^\top & d_{\uit}^\top & d_{\xkone}^\top
    \end{bmatrix}^\top \neq 0,~t\in[K]$ represents a direction in the critical cone of the open-loop equilibrium problem \eqref{eq: ol game}.
\subsection{Feedback Zero-Sum Games}
A FBNE is defined in terms of mappings $\mapit:\bb{R}^n\to\bb{R}^{m_i},~t\in[K], i=1,2$, value functions $\valt:\bb{R}^n\to\bb{R},~t\in[K+1]$, and control-value functions $\convalt:\bb{R}^n\times\bb{R}^{m_1}\times\bb{R}^{m_2}\to\bb{R},~t\in[K+1]$. These are related to each other recursively as
\begin{align}
    \val(\x_{K+1}) &:= \cost_{K+1}\lb\x_{K+1}\rb,\\
    \convalt(\xt,\uonet,\utwot) &:= \cost_t\lb\xt,\uonet,\utwot\rb + \val_{t+1}\left(f_t\lb\xt,\uonet,\utwot\rb\right),\\
    \valt(\xt) &:= \convalt\left(\xt,\maponet\lb\xt\rb,\maptwot\lb\xt\rb\right),~\forall~t\in[K].
\end{align}
While $V_t, Z_t$ correspond to agent $1$, $-V_t, -Z_t$ correspond to agent $2$.
Let the controls and states at a local zero-sum FBNE be denoted by $\uifb_{1:K} = \{\uifbtime{1},\dots,\uifbtime{K}\}, ~i=1,2$, and $\x\left(\uonefb_{1:K}, \utwofb_{1:K}\right)$ respectively. Then the maps $\mapit(\xt)$ are defined to return $\uifbtime{t}$ such that
\begin{align}
    \nonumber \convalt&(\xt, \uonefbtime{t}, \utwot) \leq \convalt(\xt, \uonefbtime{t}, \utwofbtime{t}) \leq \convalt(\xt, \uonet, \utwofbtime{t})\\
    &\forall ~\uonet\in\mathcal{N}(\uonefbtime{t}), ~\utwot\in\mathcal{N}(\utwofbtime{t}), ~t\in[K]. \label{eq: FBNE def}
\end{align}
If the inequalities in \eqref{eq: FBNE def} are strict, the local FBNE is said to be strict. Finding a local FBNE of a two-agent zero-sum game with only dynamics constraints amounts to locally solving the following optimization problem at each time $t\in [K]$ for agent $i$ \cite[Theorem 2.2]{laine2023computation}:
\begin{align}
    \nonumber \min_{\ui_{t:K},~\umifb_{t+1:K},~\x_{t+1:K+1}} ~\sum_{s=t}^K \cost_s\lb\x_s,\ui_s,\umifb_s\rb& + \cost_{K+1}\lb\xkone\rb, \\
    \text{s.t.}~\xsone - f_s\lb\xs, \ui_s, \umifb_s\rb =0,& ~s\in T_t,\label{eq: fb game}\\
    \nonumber \umifb_s - \mapmit\lb\xs\rb =0,& ~s\in T_{t+1}. 
\end{align}
\subsubsection{Necessary conditions for local FBNE} As before, we list the KKT conditions that states and controls corresponding to a local FBNE of \eqref{eq: fb game} should satisfy. Introducing Lagrange multipliers $\lambis, ~s\in T_t$ and $\psimapis,~s\in T_{t+1}$ for the constraints in \eqref{eq: fb game}, we define the Lagrangian for agent $i$ as
\begin{align}
    \nonumber\lagrangianiFB = \sum_{s=t}^{K+1} \lis &- \sum_{s=t}^{K} \lambistop\left(\xsone - f_s\right) \\
    &-\sum_{s=t+1}^K\psimapistop\lb\umifb_s - \mapmit\rb,
\end{align}
which in turn yields the following KKT conditions:
\begin{align}
    \nabla_{\umifbtime{s}}\lis + (\nabla_{\umifbtime{s}}f_s)^\top \lambis- \psimapis =& ~0,~s\in T_{t+1}\label{eq: fb kkt nabla umis},\\
    \nabla_{\uis}\lis + (\nabla_{\uis}f_s)^\top \lambis =& ~0,~s\in T_t,\label{eq: fb kkt nabla uis}\\
    \nabla_{\x_{K+1}}\li_{K+1} -\lamb^i_K =& ~0,\label{eq: fb kkt nabla xfinal}\\
    \x_{s+1} - f_s(\xs,\uis,\umifbtime{s}) =&~ 0, ~\forall ~s\in T_t,\label{eq: fb kkt dyn cons}\\
    \umifb_s - \mapmit(\xs) =&~0, ~\forall~s\in T_{t+1},\label{eq: fb kkt -i feedback pol cons}\\
    \nonumber\nabla_{\xs} \lis - \lamb^i_{s-1} + \lb\nabla_{\xs}f_s\rb^\top\lambis +\lb\nabla_{\xs}\mapmis\rb&^\top\psimapis\\
    =&~ 0,~\forall~s\in T_{t+1}. \label{eq: fb kkt nabla xs}
\end{align}
\subsubsection{Sufficiency conditions for local FBNE} The stage-wise Lagrangian for agent $i$ in the feedback case becomes
\begin{align}
    \nonumber\lagrangianiFB_s = \lis -& \lambistop\lb\xsone-f_s\rb\\
    -&\psimap^{i\top}_s\lb\umifbtime{s}-\mapmis\rb, ~s\in T_{t+1}. \label{eq: fb stagewise lagrangian}
\end{align}
in addition to the KKT conditions \eqref{eq: fb kkt nabla umis}-\eqref{eq: fb kkt nabla xs}, using \eqref{eq: fb stagewise lagrangian}, the second order sufficiency condition for a local FBNE becomes
\begin{align}
    \nonumber d_{\uit}^\top \nabla_{\uit}^2  \lit d_{\uit} + \sum_{s=t+1}^K\begin{bmatrix}
        d_{\xs}\\d_{\uis}\\d_{\umis}
    \end{bmatrix}^\top &\nabla_{\xs, \uis, \umis}^2\lagrangianiFB_s\begin{bmatrix}
        d_{\xs}\\d_{\uis}\\d_{\umis}
    \end{bmatrix}\\
     + ~d_{\x_{K+1}}^\top \nabla^2_{\xkone}& \li_{K+1} d_{\x_{K+1}} > 0\label{eq: fb second order start}\\
    \nonumber\forall~ \{d_{\uit}, d_{\xs}, d_{\uis}, d_{\umis}, d_{\xkone}, &~ s\in T_{t+1}\}~\text{s.t.} \\
     d_{\xsone} - \nabla_{\xs}f_s d_{\xsone}- \nabla_{\uis, \umis}f_s& \begin{bmatrix}
        d_{\uis} \\ d_{\umis}
    \end{bmatrix} = 0~\forall~ s\in T_{t+1}, \label{eq: fb second order xs}
    \\
    d_{\xtone} - \nabla_{\uit}f_s &d_{\uit} = 0,\label{eq: fb second order xt}\\
    d_{\umis} - \nabla_{\xs}\mapmis &d_{\xs} = 0~\forall~ s\in T_{t+1}.\label{eq: fb second order end}
\end{align}
Here, $\begin{bmatrix}
        d_{\uit}^\top & d_{\xs}^\top & d_{\uis}^\top & d_{\umis}^\top & d_{\xkone}^\top
    \end{bmatrix}^\top \neq 0, ~s\in T_{t+1}$ represents a direction in the critical cone of the feedback equilibrium problem \eqref{eq: fb game}.
\begin{remark}\label{remark: fb ol diff}
    For agent $i$, the feedback game \eqref{eq: fb game} differs from the open-loop game \eqref{eq: ol game} by having (i) an equilibrium problem at every time step, and (ii) an additional constraint forcing agent $-i$ to play a feedback policy. Consequently, the feedback KKT conditions \eqref{eq: fb kkt nabla umis}-\eqref{eq: fb kkt nabla xs} and second-order conditions \eqref{eq: fb second order start}-\eqref{eq: fb second order end} differ from the corresponding open-loop KKT and second-order conditions \eqref{eq: ol kkt nabla xt}-\eqref{eq: ol kkt dynamics cons}, \eqref{eq: ol second order start}-\eqref{eq: ol second order end}. However, we observe that if the feedback policy constraints in the feedback game are \emph{weakly active} for both agents at a local FBNE, i.e., $\psimapis=\psimapmis=0$, the feedback KKT conditions reduce to the open-loop KKT conditions. Further, in this case, the feedback constraint in \eqref{eq: fb second order end}, corresponding to the weakly active constraint, can be dropped \cite[Chapter 12]{nocedal1999numerical}. Consequently, in this case, the set of all open-loop critical cone directions becomes a \emph{subset} of the set of all feedback critical cone directions (we establish this in the proof of \Cref{theorem: main theorem}). A similar discussion holds for the second-order necessary conditions for local OLNE and local FBNE, which are similar to the sufficiency conditions with the exceptions of a non-strict inequality and the requirement of an appropriate constraint qualification \cite[Theorem 12.6]{nocedal1999numerical}.
\end{remark}
\section{Main Results}\label{sec: main results}
In the following theorem, we show how state/control trajectories corresponding to any local FBNE policies are related to a local OLNE for the zero-sum game of the type described in \Cref{sec: Preliminaries}, i.e., containing no constraints apart from those due to the game dynamics. 
\begin{theorem}\label{theorem: main theorem}
    Consider a two-agent zero-sum dynamic game with $K$ stages and dynamics constraints. Then:
    \begin{enumerate}
        \item The controls and states corresponding to any existing local FBNE also satisfy the first- and second-order necessary conditions for a local OLNE of the game.
        \item The controls and states corresponding to any existing local OLNE also satisfy the first-order necessary conditions for a local FBNE of the game.
        \item Assume that a local FBNE of the game exists, and that the corresponding local feedback equilibrium strategies $\uifbtstar=\{\mapieq_t(x_t),~t\in [K]\},~i=1,2,$ and state trajectory  $\x(\uonefbstar_{1:K}, \utwofbstar_{1:K})=\{\x^*_1,\dots,\x^*_{K+1}\}$ satisfy the FBNE second-order sufficiency conditions. Then $\uifbtstar$ and $\x(\uonefbstar_{1:K}, \utwofbstar_{1:K})$ also constitute a local OLNE of the game.
    \end{enumerate}
\end{theorem}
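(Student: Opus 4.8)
The linchpin of all three parts is a single structural fact about the feedback multipliers: at any local FBNE of a zero-sum game, the feedback-policy multipliers vanish, $\psi^i_s = \psi^{-i}_s = 0$, and the costates are antisymmetric, $\lambda^i_s = -\lambda^{-i}_s$, for every stage $s$. I would prove this by backward induction on $s$, invoking the observation in \Cref{remark: fb ol diff} that once $\psi = 0$ the feedback KKT system \eqref{eq: fb kkt nabla umis}--\eqref{eq: fb kkt nabla xs} collapses onto the open-loop system \eqref{eq: ol kkt nabla xt}--\eqref{eq: ol kkt dynamics cons}. The base case is the terminal condition \eqref{eq: fb kkt nabla xfinal}: since $\ell^{-i}_{K+1} = -\ell^i_{K+1}$, one gets $\lambda^{-i}_K = -\lambda^i_K$ immediately. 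For the inductive step I would read off $\psi^i_s$ from \eqref{eq: fb kkt nabla umis}, then substitute agent $-i$'s own-control stationarity \eqref{eq: fb kkt nabla uis} (written for $-i$) together with $\ell^{-i} = -\ell^i$ and the induction hypothesis $\lambda^i_s = -\lambda^{-i}_s$; the two $(\nabla_{u^{-i}_s} f_s)^\top \lambda$ terms cancel and force $\psi^i_s = 0$, and symmetrically $\psi^{-i}_s = 0$. Feeding $\psi^i_s = \psi^{-i}_s = 0$ back into the costate recursion \eqref{eq: fb kkt nabla xs} for both agents propagates the antisymmetry to $\lambda^i_{s-1} = -\lambda^{-i}_{s-1}$, closing the induction.

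Given this lemma, the first-order claims are almost immediate. For item 1's first-order half, I would use the FBNE costates as the candidate OLNE multipliers: with $\psi = 0$, equations \eqref{eq: fb kkt nabla umis}, \eqref{eq: fb kkt nabla uis}, \eqref{eq: fb kkt nabla xfinal}, \eqref{eq: fb kkt dyn cons}, \eqref{eq: fb kkt nabla xs} become termwise identical to \eqref{eq: ol kkt nabla ut}, \eqref{eq: ol kkt nabla xfinal}, \eqref{eq: ol kkt dynamics cons}, \eqref{eq: ol kkt nabla xt}. For item 2 (a local OLNE satisfies the local FBNE first-order conditions) I would first establish the analogous costate antisymmetry directly from the OLNE recursion \eqref{eq: ol kkt nabla xt}--\eqref{eq: ol kkt nabla xfinal} (the same backward induction, now with no $\psi$ present), then set the candidate FBNE multipliers to $\psi^i_s = 0$ and $\lambda^{i,\mathrm{FB}} = \lambda^{i,\mathrm{OL}}$ and verify that the extra feedback stationarity \eqref{eq: fb kkt nabla umis} holds precisely because $\nabla_{u^{-i}_s}\ell^i_s + (\nabla_{u^{-i}_s} f_s)^\top \lambda^i_s = 0$, which is again the antisymmetry combined with agent $-i$'s open-loop control stationarity.

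For the second-order statements (the second half of item 1 and all of item 3) I would work at the level of the critical cones and the Lagrangian Hessians. Because $\psi^i_s = 0$, the feedback-constraint term contributes nothing to $\nabla^2 \mathcal{L}^{i,\mathrm{FB}}_s$, so on directions with $d_{u^{-i}_s} = 0$ the feedback quadratic form \eqref{eq: fb second order start} reduces exactly to the open-loop form \eqref{eq: ol second order start}; moreover, dropping the now weakly-active constraint \eqref{eq: fb second order end} exhibits the open-loop critical cone as the slice $\{d_{u^{-i}_s} = 0\}$ of the feedback cone, the embedding promised in \Cref{remark: fb ol diff}. The necessary direction (item 1) and the sufficiency direction (item 3) would then follow by transferring the sign of the quadratic form across this embedding.

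The main obstacle will be exactly this last transfer. Agent $i$'s feedback second-order condition alone controls the form only on the closed-loop slice $d_{u^{-i}_s} = \nabla_{x_s}\pi^{-i}_s\, d_{x_s}$, not on the open-loop slice $d_{u^{-i}_s} = 0$, and a naive comparison runs the inequality the wrong way, since the adversary's feedback response only raises agent $i$'s cost. The resolution, which I would carry out carefully, is to use \emph{both} agents' feedback second-order conditions simultaneously together with the zero-sum antisymmetry of the Hessians, $\nabla^2 \mathcal{L}^{-i,\mathrm{FB}}_s = -\nabla^2 \mathcal{L}^{i,\mathrm{FB}}_s$ (a consequence of $\ell^{-i} = -\ell^i$ and $\lambda^{-i} = -\lambda^i$). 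Combining the minimizer's and the maximizer's conditions yields a pair of Schur-complement inequalities whose joint effect is a genuine saddle structure, from which positivity (for sufficiency) or nonnegativity (for necessity) of the open-loop form on its own cone can be extracted via a min--max interchange.
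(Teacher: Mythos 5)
Your first-order argument coincides with the paper's: the backward induction establishing $\lambda^{-i}_s=-\lambda^i_s$ and $\psi^i_s=\psi^{-i}_s=0$ at any local FBNE (base case from \eqref{eq: fb kkt nabla xfinal} and the zero-sum terminal cost, inductive step from \eqref{eq: fb kkt nabla umis}, \eqref{eq: fb kkt nabla uis} and \eqref{eq: fb kkt nabla xs}) is exactly the cascading pattern the paper uses, and the subsequent collapse of the feedback KKT system onto the open-loop one, as well as the converse direction of part~2 with the choice $\psi^i_t=\psi^{-i}_t=0$, match the paper's proof essentially line for line. Those portions of your proposal are fine as written.

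The gap is in the second-order transfer, i.e., the second half of part~1 and all of part~3. You correctly identify the obstacle --- agent $i$'s feedback second-order condition constrains the quadratic form only on the slice $d_{u^{-i}_s}=\nabla_{x_s}\pi^{-i}_s\,d_{x_s}$, which is not the open-loop slice $d_{u^{-i}_s}=0$ --- but your proposed resolution (combining both agents' conditions with the antisymmetry of the stage Hessians into ``Schur-complement inequalities'' and a ``min--max interchange'') is a plan, not a proof: you never exhibit the inequalities, verify that the cones on which they hold actually cover the open-loop cone, or carry out the interchange, and it is not evident that this program closes. The paper takes a different and much shorter route: because $\psi^i_s=\psi^{-i}_s=0$, it treats the feedback-policy constraints as weakly active and \emph{drops} \eqref{eq: fb second order end} from the definition of the critical cone (per \Cref{remark: fb ol diff}, citing Nocedal--Wright), so that the feedback critical cone is enlarged to the set cut out only by the linearized dynamics \eqref{eq: fb second order xs}--\eqref{eq: fb second order xt}; the open-loop cone is then literally the $d_{u^{-i}_s}=0$ slice of this enlarged cone, and both nonnegativity (necessity, part~1) and positivity (sufficiency, part~3) transfer by restricting a \emph{single} agent's quadratic form --- no two-agent saddle argument is needed. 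To complete your proof you should either adopt that cone-enlargement step explicitly (and, if you share the reservation implicit in your ``wrong way'' remark about whether a weakly active \emph{equality} constraint may be dropped from the critical cone, address it head-on rather than route around it), or else actually execute the saddle-structure argument you sketch; as submitted, the second-order claims are unproven.
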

\begin{proof}
    We prove the first part by showing that at any local FBNE of the game, the feedback policy constraints \eqref{eq: fb kkt -i feedback pol cons} are weakly active for both agents. The KKT condition for agent $-i$ corresponding to \eqref{eq: fb kkt nabla xfinal} is
    \begin{equation}
    \nabla_{\x_{K+1}}\lmi_{K+1}(\xstartime{K+1}) -\lambmi_K =0. \label{eq: proof mi final state kkt}
    \end{equation}
    Because the game is zero-sum, $\nabla_{\x_{K+1}}\lmi_{K+1} = - \nabla_{\x_{K+1}}\li_{K+1}$, and from \eqref{eq: proof mi final state kkt} combined with \eqref{eq: fb kkt nabla xfinal} evaluated at $\xstartime{K+1}$ we get
    \begin{equation}
        \lambmi_K = -\lambi_K. \label{eq: neg lambdas}
    \end{equation}
    Using the KKT condition \eqref{eq: fb kkt nabla umis} corresponding to agent $-i$ evaluated at time $s=K$, and \eqref{eq: neg lambdas}, we get 
    \begin{align}
        -\nabla_{\uifbtime{K}}\li_K - (\nabla_{\uifbtime{K}}f_K)^\top \lambi_{K}- \psimapmik = ~0. \label{eq: mi umis kkt feedback}
    \end{align}
    Using \eqref{eq: fb kkt nabla uis} evaluated at time $s=K$ and \eqref{eq: mi umis kkt feedback} we get
    \begin{align}
        \psimapmik = 0. \label{eq: psimapmik = 0}
    \end{align}
    Similarly, for the Lagrange multiplier for agent $i$ in the KKT conditions of agent $-i$, we get
    \begin{align}
        \psimapik = 0. \label{eq: psimapik = 0}
    \end{align}
    Using the KKT condition \eqref{eq: fb kkt nabla xs} for both agents $i, -i$ evaluated at time $s=K$, \eqref{eq: neg lambdas}, \eqref{eq: psimapmik = 0} and \eqref{eq: psimapik = 0}, we get
    \begin{align}
        \lambmi_{K-1} = -\lambi_{K-1}. \label{eq: neg lambdas cascading}
    \end{align}
    Equation \eqref{eq: neg lambdas cascading} suggests the following recursive pattern, which can be verified by continuing the previous arguments: $\lambmi_t = -\lambi_t,~\forall~t\in[K]$. This yields
    \begin{align}
        \psimap^i_t = \psimap^{-i}_t = 0~\forall~t\in T_2.\label{eq: all psi zero}
    \end{align}
    \emph{Thus, at a local FBNE, the constraints forcing both agents to play feedback policies are weakly active.} From \Cref{remark: fb ol diff} and \eqref{eq: all psi zero}, we can conclude that the local FBNE KKT conditions \eqref{eq: fb kkt nabla umis}-\eqref{eq: fb kkt nabla xs} imply the local OLNE KKT conditions \eqref{eq: ol kkt nabla xt}-\eqref{eq: ol kkt dynamics cons} and thus $\uonefbstar,~\utwofbstar,~\x(\uonefbstar_{1:K}, \utwofbstar_{1:K})$ also satisfy the first-order necessary conditions for a local OLNE. 
    \par
    Now consider the feedback second-order sufficiency conditions \eqref{eq: fb second order start}-\eqref{eq: fb second order end}. Because the feedback policy constraints are weakly active, from \Cref{remark: fb ol diff}, the condition \eqref{eq: fb second order end} can be dropped while considering the directions in the critical cone for which \eqref{eq: fb second order start} should hold. Further, we observe that the directions such that
    \begin{align}
    d_{\umit} & =0  \label{eq: fb reduced second order umit}\\
    d_{\xtone} - \nabla_{\uit}f_s d_{\uit} &= 0, \label{eq: fb reduced second order xt}\\
    d_{\xsone} - \nabla_{\xs}f_s d_{\xsone}- \nabla_{\uis}f_s 
        d_{\uis}
     &= 0~\forall~ s\in T_{t+1}, \label{eq: fb reduced second order xs}
    \end{align}
    all lie in the directions in the critical cone defined by \eqref{eq: fb second order xs}, \eqref{eq: fb second order xt}. We observe that \eqref{eq: fb reduced second order umit}-\eqref{eq: fb reduced second order xt} when applied to \eqref{eq: fb second order start} exactly yield the open-loop second-order sufficiency conditions \eqref{eq: ol second order start}-\eqref{eq: ol second order end}.  \emph{Thus the directions in the open-loop critical cone are a subset of the directions in the feedback critical cone.} From \Cref{remark: fb ol diff} a similar relation holds for second order necessary conditions between feedback and open-loop settings. Because the FBNE controls/states must be satisfying the feedback second-order necessary conditions, this implies that they also satisfy the OLNE second order necessary conditions. This proves the first part of the theorem.
    \par
    To prove the second part of the theorem, let a local OLNE exist and $\uoneolstar_{1:K}, \utwoolstar_{1:K}, \x^{*,OL}_{1:K+1}:=\x(\uoneolstar_{1:K}, \utwoolstar_{1:K})$ denote the corresponding controls and states. Thus, $\uoneolstar_{1:K}, \utwoolstar_{1:K}, \x^{*,OL}$ satisfy the open-loop KKT conditions \eqref{eq: ol kkt nabla xt}-\eqref{eq: ol kkt dynamics cons}. Then it is trivial to observe that $\uoneolstar_{1:K}, \utwoolstar_{1:K}, \x^{*,OL}$ also satisfy the feedback KKT conditions \eqref{eq: fb kkt nabla umis}-\eqref{eq: fb kkt nabla xs} for the choice of feedback policy constraints $\psimap^i_t = \psimap^{-i}_t = 0~\forall~t\in[K]$. Note that the open-loop trajectory also satisfies \eqref{eq: fb kkt -i feedback pol cons}, because the open-loop controls are derived by solving the KKT conditions backwards in time, which implies that the local OLNE controls satisfy \eqref{eq: FBNE def}.
    \par
    For the third part of the theorem, if $\uonefbstar,~\utwofbstar,~\x(\uonefbstar_{1:K}, \utwofbstar_{1:K})$ satisfy the feedback second-order sufficiency conditions \eqref{eq: fb second order start}-\eqref{eq: fb second order end}, from the discussion above, we can conclude that $\uonefbstar,~\utwofbstar,~\x(\uonefbstar_{1:K}, \utwofbstar_{1:K})$ also satisfy the second-order sufficiency conditions of (and thus constitute) a local OLNE of the game.
\end{proof}
\section{A Broader Class of Zero-Sum Dynamic Games}
The results in \Cref{sec: main results} rely on a cascading pattern which ensures that the feedback policy constraints \eqref{eq: fb kkt -i feedback pol cons} are only weakly active at a local FBNE. While true for a zero-sum game with only dynamics constraints, this pattern may not hold true for zero-sum games with more arbitrary state/control constraints in general. In this section, however, we identify a broad class of games in which a similar property still arises. 

Consider the class of two-agent zero-sum games with (i) dynamics constraints, and (ii) control bound (inequality) constraints for each agent. Such bound constraints often occur in practical applications, for example, in systems with hard actuator limits.

For such a game, the corresponding open-loop equilibrium problem for agent $i$ is
\begin{align}
    \nonumber &\min_{\x_{2:K+1},~\ui_{1:K}} \sum_{t=1}^K \lit(\xt,\uit,\umit) + \li_{K+1}(\xkone)\\
    \nonumber\text{s.t.}~\,&\xtone=~f_t(\xt,\uonet,\utwot), ~t\in[K],\\
    &\quad\quad\quad\,\,\,\ait\leq\uit\leq\bit, ~t\in[K].  \label{eq: new ol game}
\end{align}
Using Lagrange multipliers $\lambitnew, \nulow^i_t, \nuhigh^i_t,~t \in [K]$, the Lagrangian for agent $i$ defined as
\begin{align}
    \nonumber\lagrangianiOLnew := \sum_{t=1}^{K+1} \lit &- \sum_{t=1}^{K} \bigg(\lambitopnew\left(\xtone - f_t\right)\\
    &+ \nulow^{i\top}_t\lb\uit-\ait\rb + \nuhigh^{i\top}_t\lb\bit-\uit\rb\bigg).
\end{align}
The feedback equilibrium problem for agent $i$ becomes
\begin{align}
    \nonumber \min_{\ui_{t:K},~\umifb_{t+1:K},~\x_{t+1:K+1}} ~\sum_{s=t}^K \cost_s\lb\x_s,\ui_s,\umifb_s\rb& + \cost_{K+1}\lb\xkone\rb, \\
    \nonumber\text{s.t.}~\xsone - f_s\lb\xs, \ui_s, \umifb_s\rb =0,& ~s\in T_t,\\
    \nonumber \umifb_s - \mapmit\lb\xs\rb =0,& ~s\in T_{t+1},\\
    \ais\leq\uifb_s\leq\bis,& ~s\in T_{t}.\label{eq: new fb game}
\end{align}
Note that we do not encode agent $-i$'s control constraints in agent $i$'s problem at time $t$, as these constraints will be implicitly imposed through agent $-i$'s problem at future time steps. Using Lagrange multipliers $\lambisnew, \nulow^i_s, \nuhigh^i_s,~s \in T_t$ and $\psimapisnew, s\in T_{t+1}$, the Lagrangian corresponding to \eqref{eq: new fb game} is defined as
\begin{align}
    &\lagrangianiFBnew = \sum_{s=t}^{K+1} \lis -\sum_{s=t+1}^K\psimapistopnew\lb\umifb_s - \mapmit\rb  \\
    -\sum_{s=t}^{K}&\lb \lambistopnew\left(\xsone - f_s\right) + \nulow^{i\top}_s\lb\uis - \ais\rb + \nuhigh^{i\top}_s\lb\bis - \uis\rb \rb.\nonumber
\end{align}

In such constrained games with nonconvex-nonconcave objectives and nonlinear dynamics, it is usually only possible to find solutions that (approximately) satisfy the FBNE first-order necessary conditions using existing game-theoretic feedback methods \cite{laine2023computation, fridovich2020efficient}. These first-order feedback solutions also require much more computation than that required to find solutions satisfying OLNE first-order necessary conditions. Existing open-loop game-theoretic solvers can find exact solutions that satisfy OLNE first-order (and second-order) necessary conditions in such games \cite{gupta2024second, mazumdar2019finding, adolphs2019local}. Thus, it is valuable to know if some relation between local OLNE and local FBNE akin to \Cref{theorem: main theorem} holds in such games as well.
\par 
To this end, we show that when strict complementarity holds (a standard assumption) in such zero-sum games, the local OLNE first-order necessary conditions are the same as the local FBNE first-order necessary conditions.

\begin{theorem}
    Consider a two-agent zero-sum dynamic game with $K$ stages, dynamic constraints, and control bound (inequality) constraints for each agent. Then:
    \begin{enumerate}
        \item The controls and states corresponding to any existing local FBNE where strict complementarity holds also satisfy the first-order necessary conditions for a local OLNE of the game.
        \item The controls and states corresponding to any existing local OLNE where strict complementarity holds also satisfy the first-order necessary conditions for a local FBNE of the game.
    \end{enumerate}
\end{theorem}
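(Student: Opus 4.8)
The plan is to reproduce the backward cascade from the proof of \Cref{theorem: main theorem}, modified to carry the nonzero bound multipliers. Forming the first-order conditions of $\lagrangianiOLnew$ and $\lagrangianiFBnew$, observe that the control constraints $\ait\le\uit\le\bit$ are state-independent, so they alter only the own-control stationarity rows, each of which now picks up the term $-\nulow^i_s+\nuhigh^i_s$; the terminal, state-gradient \eqref{eq: fb kkt nabla xs}, and cross-control \eqref{eq: fb kkt nabla umis} conditions are structurally unchanged. For part 1, begin at the terminal step: the terminal conditions together with the zero-sum identity $\nabla_{x_{K+1}}\ell^{-i}_{K+1}=-\nabla_{x_{K+1}}\ell^i_{K+1}$ give $\tilde\lambda^{-i}_K=-\tilde\lambda^i_K$, exactly as in \eqref{eq: neg lambdas}.

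Next, combine agent $-i$'s cross-control condition at $s=K$ with agent $i$'s own-control stationarity (now carrying $-\nulow^i_K+\nuhigh^i_K$) and $\tilde\lambda^{-i}_K=-\tilde\lambda^i_K$. In contrast with \Cref{theorem: main theorem}, where the analogous step forced the feedback multiplier to vanish, here it yields $\tilde\psi^{-i}_K=\nuhigh^i_K-\nulow^i_K$, and symmetrically $\tilde\psi^i_K=\nuhigh^{-i}_K-\nulow^{-i}_K$. The key observation is that, by complementary slackness, a component of $\nuhigh-\nulow$ can be nonzero only where the corresponding control bound is active. Under strict complementarity the active set is locally invariant and the feedback policy is locally differentiable, so on each clamped component the policy equals the constant bound $\ait$ or $\bit$ and the matching row of $\nabla_{x_K}\pi^{-i}_K$ vanishes. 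Hence, although $\tilde\psi^i_K$ itself need not be zero, the product $(\nabla_{x_K}\pi^{-i}_K)^\top\tilde\psi^i_K$ vanishes term-by-term (and likewise $(\nabla_{x_K}\pi^i_K)^\top\tilde\psi^{-i}_K=0$).

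With these products vanishing, the feedback state-stationarity \eqref{eq: fb kkt nabla xs} at $s=K$ collapses to its open-loop form for both agents, which advances the cascade to $\tilde\lambda^{-i}_{K-1}=-\tilde\lambda^i_{K-1}$; iterating backward in time gives $\tilde\lambda^{-i}_t=-\tilde\lambda^i_t$ and $(\nabla_{x_s}\pi^{-i}_s)^\top\tilde\psi^i_s=0$ at every stage. Since the own-control, terminal, dynamics, and bound/complementarity conditions of \eqref{eq: new fb game} already coincide with those of \eqref{eq: new ol game}, and the only residual feedback-specific term now vanishes, the local FBNE (with its multipliers $\nulow,\nuhigh$) satisfies the first-order necessary conditions of \eqref{eq: new ol game}, proving part 1. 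For part 2, retain the OLNE's $\tilde\lambda,\nulow,\nuhigh$ and \emph{define} $\tilde\psi^i_s$ through the cross-control condition \eqref{eq: fb kkt nabla umis}; zero-sum together with the open-loop relation $\tilde\lambda^{-i}_s=-\tilde\lambda^i_s$ again gives $\tilde\psi^i_s=\nuhigh^{-i}_s-\nulow^{-i}_s$, so the same argument forces $(\nabla_{x_s}\pi^{-i}_s)^\top\tilde\psi^i_s=0$, the feedback state-stationarity reduces to the already-satisfied open-loop one, and the policy-consistency constraint \eqref{eq: fb kkt -i feedback pol cons} holds by the backward-in-time construction used in \Cref{theorem: main theorem}.

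The main obstacle is making rigorous the vanishing of $(\nabla_{x_s}\pi^{-i}_s)^\top\tilde\psi^i_s$, which is exactly where strict complementarity is indispensable: it simultaneously confines $\nuhigh-\nulow$ to the active set and, via local active-set invariance and differentiability of the clamped policy, forces the corresponding rows of the policy Jacobian to be zero. Without it, a degenerate component---a vanishing multiplier at an active bound, or an active bound whose policy is not locally clamped---could leave a nonzero cross term and break the reduction to the open-loop conditions.
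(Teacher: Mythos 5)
Your proposal is correct and follows essentially the same route as the paper's own proof: the terminal-stage identity $\tilde\lambda^{-i}_K=-\tilde\lambda^i_K$, the identification $\tilde\psi^i_K=\bar\nu^{-i}_K-\underaccent{\bar}{\nu}^{-i}_K$, the strict-complementarity argument that makes $(\nabla_{x_K}\pi^{-i}_K)^\top\tilde\psi^i_K$ vanish componentwise, the backward cascade, and the explicit choice of $\tilde\psi$ for the converse direction all match. Your added justification of why the clamped rows of the policy Jacobian vanish (local active-set invariance) is a slightly more explicit version of the step the paper states tersely.
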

\begin{proof}
    The local OLNE first-order necessary conditions for \eqref{eq: new ol game} are
     \begin{align}
     \nabla_{\xt}\lit + \nabla_{\xt}f_t^\top \lambitnew -\lambnew^i_{t-1} =& ~0, ~\forall ~t\in T_2,\label{eq: new ol kkt nabla xt}\\
     \nabla_{\uit}\lit + \nabla_{\uit}f_t^\top \lambitnew - \nulow^i_t + \nuhigh^i_t =& ~0, ~\forall ~t\in[K], \label{eq: new ol kkt nabla ut}\\
     \nabla_{\x_{K+1}}\li_{K+1} - \lambnew^i_{K} =&~ 0,\label{eq: new ol kkt nabla xfinal}\\
     \xtone - f_t(\xt,\uonet,\utwot) =&~ 0, ~\forall ~t\in[K], \label{eq: new ol kkt dynamics cons}\\
     \ait \leq \uit \perp \nulow^i_t\geq 0,&~\forall ~t\in[K],\label{eq: new ol kkt low u cons}\\
     0 \leq \nuhigh^i_t \perp \uit \leq \bit,&~\forall ~t\in[K].\label{eq: new ol kkt high u cons}
 \end{align}
 Similarly, the local FBNE first-order necessary conditions for \eqref{eq: new fb game} are
 \begin{align}
         \nabla_{\umifbtime{s}}\lis + (\nabla_{\umifbtime{s}}f_s)^\top \lambisnew- \psimapisnew =& ~0,~s\in T_{t+1}\label{eq: new fb kkt nabla umis},\\
    \nabla_{\uis}\lis + (\nabla_{\uis}f_s)^\top \lambisnew - \nulow^i_s + \nuhigh^i_s =& ~0,~s\in T_t,\label{eq: new fb kkt nabla uis}\\
    \nabla_{\x_{K+1}}\li_{K+1} -\lambnew^i_K =& ~0,\label{eq: new fb kkt nabla xfinal}\\
    \x_{s+1} - f_s(\xs,\uis,\umifbtime{s}) =&~ 0, ~\forall ~s\in T_t,\label{eq: new fb kkt dyn cons}\\
    \umifb_s - \mapmit(\xs) =&~0, ~\forall~s\in T_{t+1},\label{eq: new fb kkt -i feedback pol cons}\\
    \ais \leq \uis \perp \nulow^i_s\geq 0&,\,\,\quad\forall ~s\in T_t,\label{eq: new fb kkt low u cons}\\
     0\leq\nuhigh^i_s \perp \uis \leq \bis&,\quad\,\,\forall ~s\in T_t,\label{eq: new fb kkt high u cons}\\
    \nonumber\nabla_{\xs} \lis - \lambnew^i_{s-1} + \lb\nabla_{\xs}f_s\rb^\top\lambisnew +\lb\nabla_{\xs}\mapmis\rb&^\top\psimapisnew\\
    =&~ 0,~\forall~s\in T_{t+1}. \label{eq: new fb kkt nabla xs}
 \end{align}
 If none of the control bound constraints are active at a local FBNE/OLNE for both agents, then the result trivially follows from \Cref{theorem: main theorem}. Thus, we need to analyze the case when at least one control bound constraint is active for an agent. 
 \par
 To prove the first part, consider the feedback necessary conditions \eqref{eq: new fb kkt nabla umis}-\eqref{eq: new fb kkt nabla xs}. From \eqref{eq: new fb kkt nabla xfinal} and the corresponding condition for agent $-i$, we get
 \begin{align}
     \lambnew^i_K = -\lambnew^{-i}_K. \label{eq: new neg lambdas}
 \end{align}
 Through \eqref{eq: new fb kkt nabla umis}, \eqref{eq: new neg lambdas} evaluated at $s=K$, and the condition for agent $-i$ corresponding to \eqref{eq: new fb kkt nabla uis} evaluated at $s=K$, we get
 \begin{align}
     \psimapiknew = \nuhigh^{-i}_K - \nulow^{-i}_K. \label{eq: new psi i}
 \end{align}
 Similarly, for agent $-i$, we get
 \begin{align}
     \psimapmiknew = \nuhigh^{i}_K - \nulow^{i}_K. \label{eq: new psi mi}
 \end{align}
 Let $[q]_j$ denote the $j^\text{th}$ component of some vector $q$, and let $[Q]_j$ denote the $j^\text{th}$ row of some matrix $Q$. Consider $[\umi_K]_j$ and $[\nabla_{\x_K}\mapmi_K]_j, ~j\in[m^{-i}]$. Due to strict complementarity, either one of the bound constraints on $[\umi_K]_j$ is active and $[\nabla_{\x_K}\mapmi_K]_j=0$, or both bounds on $[\umi_K]_j$ are inactive and $[\nulow^i_K]_j=[\nuhigh^i_K]_j = 0$. A similar argument holds for agent $-i$'s problem. Using \eqref{eq: new psi i}, \eqref{eq: new psi mi}, this implies that at any local FBNE of the game,
 \begin{align}
     \lb\nabla_{\x_K}\mapmi_K\rb^\top\psimapiknew =  \lb\nabla_{\x_K}\mapi_K\rb^\top\psimapmiknew = 0.  \label{eq: new fbne phenomenon}
 \end{align}
 Substituting \eqref{eq: new fbne phenomenon} into \eqref{eq: new fb kkt nabla xs} and using \eqref{eq: new neg lambdas}, we get $\lambnew^i_{K-1}=-\lambnew^{-i}_{K-1}$. This leads to a cascading pattern backwards in time as before, leading to
 \begin{align}
     \lb\nabla_{\x_t}\mapmi_K\rb^\top\psimapiknew =  \lb\nabla_{\x_t}\mapi_K\rb^\top\psimapmiknew = 0~ \forall~t\in T_2.\label{eq: new fbne phenomenon all times}
 \end{align}
 \Cref{eq: new fbne phenomenon all times} implies that at a local FBNE, the OLNE first-order necessary conditions \eqref{eq: new ol kkt nabla xt}-\eqref{eq: new ol kkt high u cons} become a subset of the FBNE first-order necessary conditions \eqref{eq: new fb kkt nabla umis}-\eqref{eq: new fb kkt nabla xs}. Thus, the controls/states at any local FBNE also satisfy the first-order necessary conditions for a OLNE of the game. 
 \par
 To prove the second part, let a local OLNE exist and $\uoneolstar_{1:K}, \utwoolstar_{1:K}, \x^{*,OL}_{1:K+1}:=\x(\uoneolstar_{1:K}, \utwoolstar_{1:K})$ denote the corresponding controls and states. Thus, $\uoneolstar_{1:K}, \utwoolstar_{1:K}$ and $\x^{*,OL}_{1:K+1}$ satisfy the open-loop conditions \eqref{eq: new ol kkt nabla xt}-\eqref{eq: new ol kkt high u cons}. Then through the discussion above, it is trivial to observe that $\uoneolstar_{1:K}, \utwoolstar_{1:K}$ and $\x^{*,OL}_{1:K+1}$ also satisfy the feedback first-order necessary conditions \eqref{eq: new fb kkt nabla umis}-\eqref{eq: new fb kkt nabla xs} for a choice of $\psimapnew^i_t = \nuhigh^{-i}_t - \nulow^{-i}_t, \psimapnew^{-i}_t = \nuhigh^{i}_t - \nulow^{i}_t, t\in T_2$. Thus, $\uoneolstar_{1:K}, \utwoolstar_{1:K}$ and $\x^{*,OL}_{1:K+1}$ satisfy the first-order necessary conditions for a local FBNE of the game.
\end{proof}
\section{Conclusion}
In dynamic games and outside of highly-structured (e.g., linear-quadratic) cases, a feedback Nash equilibrium (FBNE) is much more computationally expensive to compute than an open-loop equilibrium (OLNE). Thus, it is always valuable to know if a relation between the two equilibrium types exists. In the context of zero-sum games, existing results state that (i) in the linear-quadratic zero-sum game setting, unique feedback and open-loop equilibria generate the same control/state trajectory, and (ii) in a (potentially not linear-quadratic) zero-sum game, if a strongly unique FBNE and a OLNE exists, then they generate the same control/state trajectory. However, many practical applications of interest have nonquadratic, potentially nonconvex-nonconcave objectives along with nonlinear dynamics. In such settings, the (strong) uniqueness of a Nash equilibrium generally cannot be verified, and only \emph{local} Nash equilibrium are computable. Thus, existing results cannot be applied to a large class of zero-sum games with practical applications. To this end, we show that for this large class of zero-sum games:
\begin{enumerate}
    \item Any local FBNE also satisfies the first-order necessary conditions for a local OLNE of the game, and vice versa.
    \item Any local FBNE also satisfies the second-order necessary conditions for a local OLNE of the game. Further, a local FBNE satisfying feedback sufficiency conditions also constitutes a local OLNE of the game.
    \item In the presence of additional constraints on agents’ control variables, any local FBNE (where strict complementarity holds) still satisfies the first-order necessary conditions for a local OLNE of the game, and vice versa.
\end{enumerate}

\addtolength{\textheight}{-12cm}   





\section*{ACKNOWLEDGMENT}
The authors would like to thank Jesse Milzman for insightful discussions on related topics.



\bibliographystyle{IEEEtran}
\bibliography{IEEEabrv,references}

\end{document}